\newtheorem{Lemma}{Lemma}
\newtheorem{lemma}[Lemma]{$\mathbf{Lemma}$}
\begin{document}
\title{ {  Relay Selection for Cooperative NOMA }}

\author{ Zhiguo Ding, \IEEEmembership{Senior Member, IEEE}, Huaiyu Dai, \IEEEmembership{Senior Member, IEEE},  and  H. Vincent Poor, \IEEEmembership{Fellow, IEEE}\thanks{
The authors   are with the Department of
Electrical Engineering, Princeton University, Princeton, NJ 08544,
USA.   Z. Ding is also with the School of
Computing and Communications, Lancaster
University, LA1 4WA, UK. H. Dai is  with the Department of Electrical and Computer
Engineering, North Carolina State University, Raleigh, NC 27695 USA.
}\vspace{-1.8em}} \maketitle
\begin{abstract}
 This letter studies the impact of relay selection (RS) on the performance of   cooperative non-orthogonal multiple access (NOMA). In particular, a two-stage  RS strategy is proposed, and analytical results are developed to demonstrate that this two-stage    strategy can achieve the minimal outage probability among all possible RS schemes, and realize the maximal diversity gain. The provided simulation results  show that cooperative NOMA with  this two-stage RS scheme outperforms that with the conventional max-min approach, and can also yield a significant performance gain over   orthogonal multiple access.
\end{abstract}\vspace{-1.3em}
\section{Introduction}
Non-orthogonal multiple access (NOMA) has been recognized as a promising enabling technology to improve the spectral efficiency of the fifth generation (5G) mobile network, and has been recently included into the fourth generation (4G) long term evolution (LTE) system \cite{6692652,Nomading,Krikidisnoma}. The application of cooperative transmission  to NOMA is important since   spatial degrees of freedom can be still harvested even if nodes are equipped with a single antenna.

A few different forms of cooperative NOMA have been proposed in the literature. The work in \cite{Zhiguo_conoma} relied on the cooperation among NOMA users, i.e., users with strong channel conditions act as relays. A dedicated relay has been  used in \cite{7230246} to improve the transmission reliability for a user with poor channel conditions. Similarly, a dedicated  relay has been used in \cite{7219393} to serve multiple users equipped with multiple antennas. Wireless power transfer has been applied to cooperative NOMA in \cite{yuanweijsac}, as an incentive for user cooperation.

This letter is to consider a downlink communication scenario with one base station,  two users and multiple relays. The impact of relay selection on the performance of cooperative NOMA will be studied, where we will focus on   two types of relay selection  criteria. The first one is based on conventional max-min relay selection \cite{4801494}. The second one is carried out in a two-stage strategy, where the first stage is to ensure one user's targeted data rate realized, and the second is to maximize the other user's rate opportunistically. We obtain a closed form expression for the outage probability achieved by the two-stage relay selection strategy,  which shows that this two-stage scheme can realize the maximal diversity gain. Furthermore, analytical results are developed to demonstrate that the two-stage strategy is also outage-optimal, i.e., it achieves the optimal outage probability among all possible relay selection schemes. On the other hand,  the max-min relay selection criterion can achieve the same performance as the two-stage one, i.e., realizing the minimal outage probability, for a special case with symmetrical setups, but it  suffers a loss of the outage probability in general. \vspace{-1em}
\section{System Model}
Consider a downlink   scenario with one base station (BS), two users, and $N$ relays. Each node is equipped with a single antenna. Assume that there is no direct link between the BS and the users, and the BS-relay and relay-user channels experience identically and independent Rayleigh fading. Unlike \cite{6692652,Nomading,Krikidisnoma}, users are not ordered by their channel conditions, but categorized by their quality of service (QoS) requirements.
Particularly,   assume that user $1$ is to be served  for small packet transmission, i.e., quickly connected with a low data rate, and user $2$ is to be served opportunistically \cite{Zhiguo_iot}. For example, user $1$ can be a healthcare sensor which is to send safety critical information containing in a few bytes, such as  heart rates or blood pressure. On the other hand, user $2$ is to download a movie, or perform background tasks.

During the first time slot, the BS will transmit the superimposed mixture, $(\alpha_1s_1+\alpha_2s_2)$, where $s_i$ denotes the signal to user $i$, $\alpha_i$ denotes the power allocation coefficient. Note that $\alpha_1^2+\alpha_2^2=1$ and $\alpha_1\geq \alpha_2$ in order to meet user $1$'s QoS requirements \cite{Zhiguo_iot}. Therefore, relay $n$, $1\leq n\leq N$, observes
\begin{align}
y^r_{n} = h_n (\alpha_1s_1+\alpha_2s_2)+w^r_{n},
\end{align}
where $h_n$ denotes the channel gain between the BS and relay $n$, and $w^r_{n}$ denotes the additive Gaussian noise.

The conditions for a relay to decode the two signals, $s_1$ and $s_2$, are given by{\small
\begin{align}\label{condition relay}
\log \left(1+\frac{|h_n|^2\alpha_1^2}{|h_n|^2\alpha_2^2+\frac{1}{\rho}}\right)\geq R_{1}, ~~ \log(1+\rho|h_n|^2\alpha_2^2)>R_2,
\end{align}
}\noindent where $\rho$ denotes the transmit signal-to-noise ratio (SNR) and $R_i$ is the targeted data rate for user $i$.

During the second time slot, assume that relay $n$ can decode the two signals and is selected to send  $(\alpha_1s_1+\alpha_2s_2)$.  Therefore, user $i$ receives the following:
\begin{align}
y^d_{n,i} = g_{n,i} (\alpha_1s_1+\alpha_2s_2)+w^d_{n,i},~i\in \{1, 2\},
\end{align}
where $g_{n,i}$ denotes the channel gain between relay $n$ and user $i$ and $w^d_{n,i}$ denotes the additive Gaussian noise. User $1$ decodes its message with the   signal-to-interference-plus-noise ratio (SINR), $
\frac{|g_{n,1}|^2\alpha_1^2}{|g_{n,1}|^2\alpha_2^2+\frac{1}{\rho}}$,
and user $2$ decodes its own message with the SNR, $\rho \alpha_2^2 |g_{n,2}|^2$, provided that $\log \left(1+\frac{|g_{n,2}|^2\alpha_1^2}{|g_{n,2}|^2\alpha_2^2+\frac{1}{\rho}}\right)\geq R_{1}$. Note that fixed power allocation is used in this paper. Optimizing the power allocation coefficients  
 and also  using  different power allocation policies for differen time slots can further improve the performance of cooperative NOMA, which is out of the scope of this paper.
\vspace{-1em}

\subsection*{Relay Selection Strategies}
\subsubsection{Max-min relay selection}
The criterion for this type of relay selection can be obtained   as follows \cite{4801494}:
\begin{align}\label{n*}
\max & \left\{ \min \{ |h_n|^2, |g_{n,1}|^2, |g_{n,2}|^2 \}, n \in\mathcal{S}_r\right\},
\end{align}
which is to select a relay with  the strongest $\min \{ |h_n|^2, |g_{n,1}|^2, |g_{n,2}|^2 \}$.

\subsubsection{Two-stage relay selection} The aim of this relay selection strategy is to realize  two purposes simultaneously. One is to ensure user $1$'s targeted data rate is realized, and the other is to serve user $2$ with a rate as large as possible. Specifically, this two-stage  user selection strategy can be described in the following.   The first stage is to build the following   subset of the relays by focusing on user $1$'s targeted data rate:
\begin{align}\nonumber
\mathcal{S}_r = &\left\{n: 1\leq n\leq N, \frac{1}{2}\log \left(1+\frac{|h_n|^2\alpha_1^2}{|h_n|^2\alpha_2^2+\frac{1}{\rho}}\right)\geq R_{1}, \right.\\ \nonumber &\frac{1}{2}\log \left(1+\frac{|g_{n,1}|^2\alpha_1^2}{|g_{n,1}|^2\alpha_2^2+\frac{1}{\rho}}\right)\geq R_{1}\\   &\left. \frac{1}{2}\log \left(1+\frac{|g_{n,2}|^2\alpha_1^2}{|g_{n,2}|^2\alpha_2^2+\frac{1}{\rho}}\right)\geq R_{1} \right\}.\label{stateg 1}
\end{align}

Denote the size of $\mathcal{S}_r$ by $|\mathcal{S}_r|$. Among the relays in $\mathcal{S}_r$, the second stage is to select a relay which can maximize the rate for user $2$, i.e.,
\begin{align}\label{n*}
n^*=\underset{n}{\arg}~\max & \left\{ \min \{\log(1+\rho|h_n|^2\alpha_2^2),\right. \\ \nonumber &\left.~~~~~~~\log(1+\rho|g_{n,2}|^2\alpha_2^2)\}, n \in\mathcal{S}_r\right\}.
\end{align}
\section{Performance Analysis}
In this section, we will characterize the outage probability achieved by the two-stage relay selection scheme. Note that the overall outage event can be categorized as follows:
\begin{align}
\mathcal{O}= \mathcal{O}_{1}\bigcup \mathcal{O}_{2} ,
\end{align}
where $\mathcal{O}_{1}$ denotes the event that relay $n^*$ cannot decode   $s_1$, or either  of the two users cannot decode $s_1$ successfully,  and $\mathcal{O}_{2}$ denotes the event that $s_2$ cannot be decoded correctly either by relay $n^*$, or by user $2$, while $s_1$ can be decoded correctly by the three nodes.

Therefore, the outage probability can be written as follows:
\begin{align}\label{eq1}
\mathrm{P}(\mathcal{O})= \mathrm{P}(\mathcal{O}_{1})+\mathrm{P}(\mathcal{O}_{2}) .
\end{align}
The term $ \mathrm{P}(\mathcal{O}_{1}) $ can be calculated as follows:
\begin{align}\label{individual}
  \mathrm{P}(\mathcal{O}_{1}) = & \mathrm{P} (|\mathcal{S}_r|=0)\\ \nonumber =& \prod_{n=1}^{N}\left[1-\mathrm{P}\left(|h_n|^2>\xi_1\right) \right.\\ \nonumber &\times \left. \mathrm{P}\left(|g_{n,1}|^2>\xi_1\right) \mathrm{P}\left(|g_{n,2}|^2>\xi_1\right)\right],
\end{align}
where $\xi_1=\frac{\frac{\epsilon_1}{\rho}}{\alpha_1^2 - \epsilon_1\alpha^2_2}$  and $\epsilon_1=2^{2R_1}-1$. It is assumed that $\alpha_1^2 > \epsilon_1\alpha^2_2$,  otherwise the outage probability is always one, a phenomenon  also observed in \cite{Nomading}.  By using the fact that all channels are assumed to be Rayleigh fading, we can have
\begin{align}\label{eq2}
  \mathrm{P}(\mathcal{O}_{1}) =  \prod_{n=1}^{N}\left[1-e^{-3\xi_1 }\right] .
\end{align}

The term $ \mathrm{P}(\mathcal{O}_{2}) $ can be calculated as follows:
\begin{align}\label{O2}
\mathrm{P}(\mathcal{O}_{2}) = \mathrm{P}(E_{1},  |\mathcal{S}_r|>0)+\mathrm{P}( {E_2},\bar{E}_1,|\mathcal{S}_r|>0),
\end{align}
where $E_1$ denotes the event that relay $n^*$ cannot decode $s_2$, $\bar{E}_1$ denotes the complementary event  of $E_1$,  and $E_2$ denotes that user $2$ cannot decode $s_2$. The first term in the above equation can be expressed as follows:
\begin{align}
 &\mathrm{P}(E_{1}, |\mathcal{S}_r|>0)\\ \nonumber &= \mathrm{P}\left(\log(1+\rho|h_{n^*}|^2\alpha_2^2)<2R_2,|\mathcal{S}_r|>0\right) ,
 \end{align}
where $\xi_2=\frac{2^{2R_2}-1}{\rho\alpha_2^2}$.

 The second term in \eqref{O2} can be expressed as follows:
\begin{align} \nonumber
\mathrm{P}( {E_2},\bar{E}_1,|\mathcal{S}_r|>0)   =& \mathrm{P}\left(\log(1+\rho|g_{n^*,2}|^2\alpha_2^2)<2R_2, \right.\\ \nonumber &\left. \log(1+\rho|h_{n^*}|^2\alpha_2^2)>2R_2,|\mathcal{S}_r|>0\right) .
\end{align}
Therefore, the probability $ \mathrm{P}(\mathcal{O}_{2}) $ can be calculated as follows:
\begin{align}
\mathrm{P}(\mathcal{O}_{2}) =& \mathrm{P}\left(\log(1+\rho|h_{n^*}|^2\alpha_2^2)<2R_2,|\mathcal{S}_r|>0\right) \\ \nonumber &+
 \mathrm{P}\left(\log(1+\rho|g_{n^*,2}|^2\alpha_2^2)<2R_2, \right.\\ \nonumber &\left. \log(1+\rho|h_{n^*}|^2\alpha_2^2)>2R_2,|\mathcal{S}_r|>0\right).
\end{align}
Assuming $|\mathcal{S}_r|>0$,   define
\begin{align}
x_n= \min \{\log(1+\rho|h_n|^2\alpha_2^2),\log(1+\rho|g_{n,2}|^2\alpha_2^2)\},
\end{align}
and
\begin{align}
x_{n^*}= \max \{x_i, \forall i \in\mathcal{S}_r\} .
\end{align}
The probability $ \mathrm{P}(\mathcal{O}_{2}) $ can now be expressed as follows:
\begin{align}
\mathrm{P}(\mathcal{O}_{2}) =& \mathrm{P}\left(\min\left\{\log(1+\rho|h_{n^*}|^2\alpha_2^2), \right.\right.  \\ \nonumber &\left.\left.  \log(1+\rho|g_{n^*,2}|^2\alpha_2^2)<2R_2\right\} ,|\mathcal{S}_r|>0\right)\\ \nonumber =&\mathrm{P}\left(x_{n^*}<2R_2 ,|\mathcal{S}_r|>0\right).
\end{align}
The above probability can further expressed as follows:
\begin{align}
\mathrm{P}(\mathcal{O}_{2}) =& \sum^{N}_{l=1}\mathrm{P}\left(x_{n^*}<2R_2 ,|\mathcal{S}_r|=l\right)
\\ \nonumber =&\sum^{N}_{l=1}\mathrm{P}\left(x_{n^*}<2R_2 ||\mathcal{S}_r|=l\right)\mathrm{P}\left( |\mathcal{S}_r|=l\right).
\end{align}

For a relay randomly selected from $\mathcal{S}_r$, denoted by relay $n$, the cumulative distribution function (CDF) of $x_n$ can be founded as follows:
\begin{align}\nonumber
F (x) =& \mathrm{P}\left(\min \{ |h_n|^2,|g_{n,2}|^2\}<\left.\frac{2^x-1}{\rho\alpha_2^2}\right| n\in\mathcal{S}_r, |\mathcal{S}_r|\neq 0\right)\\ \nonumber =&
  \mathrm{P}\left( |h_n|^2 >|g_{n,2}|^2 ,|g_{n,2}|^2 <\left.\frac{2^x-1}{\rho\alpha_2^2}\right|\right.\\ \nonumber &\left. |h_n|^2>\xi_1, |g_{n,2}|^2>\xi_1 \right)\\ \nonumber &+
  \mathrm{P}\left( |h_n|^2 <|g_{n,2}|^2 ,|h_{n}|^2 <\left.\frac{2^x-1}{\rho\alpha_2^2}\right|\right.\\ \nonumber &\left. |h_n|^2>\xi_1, |g_{n,2}|^2>\xi_1 \right).
\end{align}
Define the two probabilities at the right hand side of the above equation by $Q_1$ and $Q_2$, respectively. The probability $Q_1$ can be expressed as follows:
\begin{align}\nonumber
Q_1 =& \frac{
  \mathrm{P}\left( |h_n|^2 >|g_{n,2}|^2,|g_{n,2}|^2 < y,  |h_n|^2>\xi_1, |g_{n,2}|^2>\xi_1 \right)}{\mathrm{P}\left( |h_n|^2>\xi_1, |g_{n,2}|^2>\xi_1 \right)}\\ \nonumber =&\frac{
  \mathrm{P}\left( |h_n|^2>\max\left\{\xi_1, |g_{n,2}|^2 \right\},\xi_1<|g_{n,2}|^2< y  \right)}{\mathrm{P}\left( |h_n|^2>\xi_1, |g_{n,2}|^2>\xi_1 \right)}
\end{align}
where $y=\frac{2^x-1}{\rho\alpha_2^2}$. The constraint on $y$,  $y\geq \xi_1$, will be explained later. By using the Rayleigh  assumption, we have
\begin{align}\nonumber
Q_1 =&   e^{2\xi_1}\int_{\xi_1}^{ y} e^{-\max\left\{\xi_1, z \right\}-z}  dz \\ \nonumber =&  \frac{1}{2}e^{2\xi_1}\left( e^{-2\xi_1}-e^{- 2y}\right).
\end{align}
$Q_2$ can be obtained similarly, and therefore,  the CDF can be expressed  as follows:
\begin{align}\nonumber
F (x) =& e^{2\xi_1}\left( e^{-2\xi_1}-e^{- 2\frac{(2^x-1)}{\rho\alpha_2^2}}\right).
\end{align}
It is important to point out the following:
\begin{align}
x&=\log(1+\rho\min \{|h_n|^2,|g_{n,2}|^2\} \alpha_2^2) \\ \nonumber &\geq \log\left(1+\rho \xi_1\alpha_2^2\right),
\end{align}
which is due to the fact that both $|h_n|^2$ and $|g_{n,2}|^2$ should be larger than $\xi_1$, since relay $n$ is in  $ \mathcal{S}_r$. With this constraint, one can easily verify that
\begin{align}
F(\log\left(1+\rho \xi_1\alpha_2^2\right))=0,
\end{align}
and $F(\infty)=1$. 
With this CDF, the   probability for $\mathcal{O}_{2}$ can be calculated  as follows:
\begin{align}\label{eq3}
\mathrm{P}(\mathcal{O}_{2}) =& \sum^{N}_{l=1}\mathrm{P}\left(x_{n^*}<2R_2 ||\mathcal{S}_r|=l\right)\mathrm{P}\left( |\mathcal{S}_r|=l\right)
\\ \nonumber =& \sum^{N}_{l=1}\left(F(2R_2)\right)^l\mathrm{P}\left( |\mathcal{S}_r|=l\right).
\end{align}
On the other hand, the probability to have $l$ relays in $\mathcal{S}_r$ can be calculated as follows:
\begin{align}
\mathrm{P}\left( |\mathcal{S}_r|=l\right)   =& {N \choose l} \prod_{n=1}^{N-l}\left[1-\mathrm{P}\left(|h_{\pi(n)}|^2>\xi_1\right) \right.\\ \nonumber &\times \left. \mathrm{P}\left(|g_{{\pi(n)},1}|^2>\xi_1\right) \mathrm{P}\left(|g_{{\pi(n)},2}|^2>\xi_1\right)\right]\\ \nonumber &\times  \prod_{n=N-l+1}^N\left[\mathrm{P}\left(|h_{\pi(n)}|^2>\xi_1\right) \right.\\ \nonumber &\times \left. \mathrm{P}\left(|g_{{\pi(n)},1}|^2>\xi_1\right) \mathrm{P}\left(|g_{{\pi(n)},2}|^2>\xi_1\right)\right],
\end{align}
where ${\pi(\cdot)}$ denotes a random permutation of the relays. Following steps similar to those used to obtain \eqref{individual}, the above probability can be obtained as follows:
\begin{align}\label{eq4)}
\mathrm{P}\left( |\mathcal{S}_r|=l\right)   =& {N \choose l}  \left[1-e^{-3\xi_1 }\right]^{N-l} e^{-3l\xi_1}.
\end{align}
By combing \eqref{eq1}, \eqref{eq2}, \eqref{eq3}, and \eqref{eq4)}, and also applying some algebraic manipulations,   the overall outage probability can be obtained in the following lemma.
\begin{lemma}\label{lemma1}
The overall outage probability achieved by the two-stage relay selection scheme can be expressed  as follows:
\begin{align}\label{leam eq1}
\mathrm{P}(\mathcal{O})=& \sum^{N}_{l=0}{N \choose l}\left(F(2R_2)\right)^l e^{-3l\xi_1}  \left[1-e^{-3\xi_1}\right]^{N-l},
\end{align}
if $\alpha_1^2 > \epsilon_1\alpha^2_2$, otherwise $\mathrm{P}(\mathcal{O})=1$.
\end{lemma}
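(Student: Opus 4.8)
The plan is to derive \eqref{leam eq1} by consolidating the four ingredients already established above, with no further probabilistic work required. These ingredients are the event decomposition \eqref{eq1}, namely $\mathrm{P}(\mathcal{O})=\mathrm{P}(\mathcal{O}_1)+\mathrm{P}(\mathcal{O}_2)$; the closed form \eqref{eq2} for $\mathrm{P}(\mathcal{O}_1)$; the conditional expansion \eqref{eq3} of $\mathrm{P}(\mathcal{O}_2)$ over the events $\{|\mathcal{S}_r|=l\}$; and the binomial law \eqref{eq4)} for $\mathrm{P}(|\mathcal{S}_r|=l)$. The only genuine idea in the proof is a bookkeeping observation that merges the two outage contributions into a single summation.

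First I would substitute \eqref{eq4)} into \eqref{eq3}, obtaining
\begin{align}\nonumber
\mathrm{P}(\mathcal{O}_{2})=\sum^{N}_{l=1}{N \choose l}\left(F(2R_2)\right)^l e^{-3l\xi_1}\left[1-e^{-3\xi_1}\right]^{N-l}.
\end{align}
The factor $\left(F(2R_2)\right)^l$ may be taken as given from \eqref{eq3}: conditioned on $|\mathcal{S}_r|=l$, the variables $x_n$ for the $l$ admitted relays are i.i.d.\ with CDF $F$ by the exchangeability of the relays, so the event $\{x_{n^*}<2R_2\}$ factorizes as the intersection of $l$ independent copies of $\{x_n<2R_2\}$.

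The key step is then to notice that $\mathrm{P}(\mathcal{O}_1)=\left[1-e^{-3\xi_1}\right]^N$ from \eqref{eq2} is exactly the $l=0$ term of the same summand: setting $l=0$ gives ${N \choose 0}=1$, $\left(F(2R_2)\right)^0=1$, $e^{-3\cdot 0\cdot\xi_1}=1$, and $\left[1-e^{-3\xi_1}\right]^{N}$, which coincides with $\mathrm{P}(\mathcal{O}_1)$. Hence adding $\mathrm{P}(\mathcal{O}_1)$ through \eqref{eq1} merely lowers the summation index from $l=1$ to $l=0$, delivering \eqref{leam eq1} directly.

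The remaining task is to justify the case distinction, which is where the only subtlety lies rather than in any hard computation. I would argue that when $\alpha_1^2\leq \epsilon_1\alpha_2^2$ the SINR for $s_1$ at every node, $\frac{|\cdot|^2\alpha_1^2}{|\cdot|^2\alpha_2^2+1/\rho}$, stays strictly below its ceiling $\frac{\alpha_1^2}{\alpha_2^2}\leq\epsilon_1$ for all finite channel gains, so the threshold $\epsilon_1=2^{2R_1}-1$ can never be met, forcing $\mathcal{S}_r=\emptyset$ and $\mathrm{P}(\mathcal{O})=1$; this also explains why $\xi_1$ in the main formula is well defined (positive) only under the assumed $\alpha_1^2>\epsilon_1\alpha_2^2$. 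I do not expect a real obstacle in any of this, since every distributional quantity is already in hand; the one point demanding care is recognizing the $\mathcal{O}_1$ term as the $l=0$ summand, which is precisely what produces the compact single-sum expression.
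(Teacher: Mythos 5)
Your proposal is correct and follows essentially the same route as the paper: the paper likewise obtains Lemma \ref{lemma1} by combining \eqref{eq1}, \eqref{eq2}, \eqref{eq3}, and \eqref{eq4)}, with the single-sum form arising exactly because $\mathrm{P}(\mathcal{O}_1)=\left[1-e^{-3\xi_1}\right]^N$ is the $l=0$ summand. Your added justifications (the i.i.d.\ factorization giving $\left(F(2R_2)\right)^l$, and the SINR ceiling $\alpha_1^2/\alpha_2^2\leq\epsilon_1$ forcing $\mathcal{S}_r=\emptyset$ in the degenerate case) only make explicit what the paper leaves implicit.
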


{\it Remark 1:}
At high SNR, $\rho$ approaches infinity,   $\xi_1$  approaches zero, which means that  the function  $F(2R_2)$ can be approximated as follows:
\begin{align}\nonumber
F (2R_2) =& e^{2\xi_1}\left( e^{-2\xi_1}-e^{- 2\frac{(2^{2R_2}-1)}{\rho\alpha_2^2}}\right)\\ \nonumber \approx&\left(  2\frac{(2^{2R_2}-1)}{\rho\alpha_2^2}-2\xi_1\right)=\frac{\gamma}{\rho},
\end{align}
where $\gamma= 2\frac{(2^{2R_2}-1)}{ \alpha_2^2}-2\frac{ \epsilon_1 }{\alpha_1^2 - \epsilon_1\alpha^2_2}$.
By using the above approximation, the overall outage probability can be approximated  as follows:
\begin{align}
\mathrm{P}(\mathcal{O})\approx&\sum^{N}_{l=0}{N \choose l}\frac{\gamma^l}{\rho^l} e^{-3l\xi_1 }  \left[1-e^{-3\xi_1}\right]^{N-l}\\ \nonumber
\approx&\frac{1}{\rho^N}\sum^{N}_{l=0}{N \choose l}\gamma^l   \left[\frac{ 3\epsilon_1 }{\alpha_1^2 - \epsilon_1\alpha^2_2} \right]^{N-l}.
\end{align}
 Therefore, the two-stage RS scheme can realize a diversity gain of $N$, which is the maximal diversity gain given the existence of the $N$ relays.

{\it Remark 2:} The optimality of the two-stage relay selection scheme is shown in the following lemma.
\begin{lemma}\label{lemma2}
For the addressed cooperative NOMA scenario, the two-stage relay selection scheme minimizes the overall outage probability.
\end{lemma}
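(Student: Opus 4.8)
The plan is to establish optimality by a pointwise, per-channel-realization dominance argument rather than by manipulating outage probabilities directly. The key observation is that, conditioned on a fixed realization of all channel gains $\{|h_n|^2,|g_{n,1}|^2,|g_{n,2}|^2\}_{n=1}^N$, the outage indicator of the two-stage scheme never exceeds that of any competing relay selection rule. Since the two-stage scheme minimizes the indicator for every realization, it automatically minimizes the expectation, i.e. $\mathrm{P}(\mathcal{O})$; because a randomized rule's conditional outage probability is a convex combination that is bounded below by the pointwise minimum, this covers deterministic and randomized rules alike.

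First I would characterize, for an arbitrary rule that selects some relay $\tilde{n}$, exactly when the system avoids outage. Reading off the decoding conditions in \eqref{condition relay} together with the relay-to-user conditions, the transmission succeeds if and only if $s_1$ is decoded by the selected relay and by both users, which is precisely the membership requirement $\tilde{n}\in\mathcal{S}_r$, and in addition $s_2$ is decoded by both the relay and user $2$, which is precisely $x_{\tilde{n}}\geq 2R_2$. Thus no outage occurs if and only if $\tilde{n}\in\mathcal{S}_r$ and $x_{\tilde{n}}\geq 2R_2$. Crucially, both $\mathcal{S}_r$ and the values $\{x_n\}$ depend only on the channel realization and not on the selection rule, so they form a common baseline across all schemes.

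Next I would split the analysis by the state of $\mathcal{S}_r$. If $|\mathcal{S}_r|=0$, no admissible $\tilde{n}$ can satisfy $\tilde{n}\in\mathcal{S}_r$, so every rule is in outage and there is nothing to improve. If $|\mathcal{S}_r|>0$, then outage avoidance is achievable for that realization if and only if $\max_{n\in\mathcal{S}_r} x_n \geq 2R_2$: any relay outside $\mathcal{S}_r$ fails user $1$, while any relay inside $\mathcal{S}_r$ with $x_n<2R_2$ fails user $2$. Since the two-stage rule selects $n^*=\arg\max_{n\in\mathcal{S}_r} x_n$, it attains $x_{n^*}=\max_{n\in\mathcal{S}_r} x_n$, and hence avoids outage exactly when avoidance is possible. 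Combining the two cases, the two-stage scheme's outage indicator equals the minimum over all rules of that indicator, for every realization; taking expectations over the common channel distribution gives $\mathrm{P}(\mathcal{O})_{\text{two-stage}}\leq \mathrm{P}(\mathcal{O})_{\text{any rule}}$, which is the claim.

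The main obstacle I anticipate is the bookkeeping needed to rule out rules that sacrifice user $1$ in the hope of a better user-$2$ channel. One must argue explicitly that selecting any relay outside $\mathcal{S}_r$ can never help: failing user $1$'s QoS constraint always triggers the outage event $\mathcal{O}_1$ regardless of how favorable the user-$2$ link is, so the trade is never beneficial. This follows from the hard constraint that user $1$ must be served, but it should be stated carefully so that the pointwise comparison genuinely ranges over \emph{all} relay selection schemes and not merely those confined to $\mathcal{S}_r$.
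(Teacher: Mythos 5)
Your proof is correct and takes essentially the same route as the paper: the paper argues by contradiction that no channel realization can exist in which $n^*$ causes outage while some other selected relay $\bar{n}^*$ does not, using exactly your two facts (any relay outside $\mathcal{S}_r$ necessarily triggers $\mathcal{O}_1$, and $n^*$ maximizes $x_n$ over $\mathcal{S}_r$, so it is optimal for avoiding $\mathcal{O}_2$). Your pointwise-dominance formulation is simply the contrapositive of that argument, with the expectation step and the extension to randomized rules made explicit.
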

\begin{proof} The lemma can be proved by contradiction. If there exists a better strategy  achieving a lower outage probability, an event that the use of relay $n^*$ causes outage, but no outage occurs with  the relay selected by the new strategy, denoted by $\bar{n}^*$, $\bar{n}^*\neq n^*$, should happen.   Recall that for any relay selection scheme, the outage event can be categorized  as follows:
\begin{align}\label{prof1}
\mathcal{O}= \mathcal{O}_{1}\bigcup \mathcal{O}_{2}.
\end{align}
 We only focus on the cases with $|\mathcal{S}_r|\neq 0$, otherwise outage always occurs, no matter which relay  is used.  When $|\mathcal{S}_r|>0$, one can conclude that relay $\bar{n}^*$ must be in $\mathcal{S}_r$, i.e., $\bar{n}^*\in \mathcal{S}_r$, otherwise outage occurs for sure by using relay $\bar{n}^*$.
 According to \eqref{stateg 1}, relay $n^*$ will not cause outage event $\mathcal{O}_1$ as well, if $|\mathcal{S}_r|\neq 0$.   Now by using the criterion in \eqref{n*} and the definition of $\mathcal{O}_2$, one can conclude that it is not possible that relay $n^*$ causes $\mathcal{O}_2$ but relay $\bar{n}^*$ does not, since relay $n^*$ is the optimal solution to avoid $\mathcal{O}_2$. The lemma is proved.
 \end{proof}
 {\it Remark 3:} Simulation results show that the two-stage  relay selection scheme outperforms the max-min scheme. However, for a special case with symmetrical setups, e.g.,  $\xi_1=\xi_2$, we can show that the two schemes achieve the same performance. The overall outage probability can be expressed as follows:
\begin{align}
\mathrm{P}(\mathcal{O}) &= \mathrm{P} \left(|h_n|^2<\xi_1\right)  + \mathrm{P} \left(   |h_n|^2 <\xi_2,|h_n|^2>\xi_1\right)\\ \nonumber &+  \mathrm{P} \left( |g_{n,1}|^2<\xi_1, |h_n|^2 >\xi_2,|h_n|^2>\xi_1\right)
\\ \nonumber &+  \mathrm{P} \left( |g_{n,2}|^2<\xi_1, |g_{n,1}|^2>\xi_1, |h_n|^2 >\xi_2,|h_n|^2>\xi_1\right)
\\ \nonumber &+  \mathrm{P} \left( |g_{n,2}|^2<\xi_2, |g_{n,2}|^2>\xi_1, |g_{n,1}|^2>\xi_1, |h_n|^2 >\xi_2,\right. \\ \nonumber &\left.~~~~~~|h_n|^2>\xi_1\right).
\end{align}
When $\xi_1=\xi_2$, we can have
\begin{align}
\mathrm{P}_o &=   \mathrm{P} \left(   |h_n|^2 < \xi_1\right)\\ \nonumber &+  \mathrm{P} \left( |g_{n,1}|^2<\xi_1, |h_n|^2 >\xi_1\right)
\\ \nonumber &+  \mathrm{P} \left( |g_{n,2}|^2<\xi_1, |g_{n,1}|^2>\xi_1, |h_n|^2 > \xi_1\right).
\end{align}
Note that the following equality holds
\begin{align}
  &   \mathrm{P} \left(   |h_n|^2 < \xi_1\right) +  \mathrm{P} \left( |g_{n,1}|^2<\xi_1, |h_n|^2 >\xi_1\right)
\\ \nonumber &=   \mathrm{P} \left(\min\{ |g_{n,1}|^2, |h_n|^2\}<\xi_1\right).
\end{align}
By using this equality, the outage probability achieved by the max-min approach is given by
\begin{align}\nonumber
\mathrm{P}_o &=     \mathrm{P} \left(\min\{|g_{n,2}|^2, |g_{n,1}|^2, |h_n|^2\}<\xi_1, \forall n\in\{1, \cdots, N\}\right)\\ \nonumber &=\left[1-\mathrm{P} \left(\min\{|g_{\pi(1),2}|^2, |g_{\pi(1),1}|^2, |h_{\pi(1)}|^2\}>\xi_1\right)\right]^N \\   &= [1-e^{-3\xi_1}]^N,
\end{align}
which is exactly the same as Lemma \ref{lemma1} by applying $\xi_1=\xi_2$.
\vspace{-0.8em}
\begin{figure}[!htbp]\centering\vspace{-0.5em}
    \epsfig{file=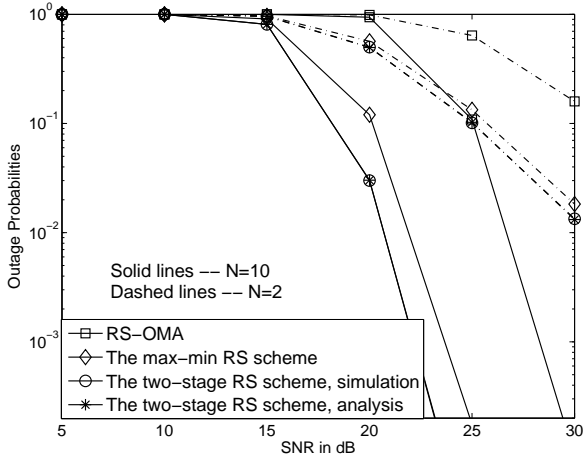, width=0.45\textwidth, clip=}\vspace{-0.5em}
\caption{ Comparison between cooperative OMA and   NOMA with different relay selection (RS) strategies. $R_1=0.5$ bit per channel use (BPCU), $R_2=2$ BPCU, and $\alpha_1=\frac{1}{4}$.   The analytical results are based on Lemma \ref{lemma1}.  }\label{fig1}\vspace{-1em}
\end{figure}

\vspace{-0.8em}
\section{Numerical Studies}
In this section, the performance of   cooperative NOMA   with the   two relay selection strategies is evaluated by using computer simulations. In Fig. \ref{fig1}, the performance of cooperative NOMA is compared with that of orthogonal multiple access (OMA). For OMA, $4$ time slots are needed, and the max-min criterion is used for relay selection. As can be observed from Fig. \ref{fig1}, cooperative NOMA can efficiently reduce the outage probability, and hence the use of cooperative NOMA can offer a significant performance gain over OMA in terms of reception reliability. The reason for this performance gain is that the use of NOMA can ensure that two users are served simultaneously, whereas two times of bandwidth resources, such as time slots, are needed for OMA to serve the two users.
\begin{figure}[!htbp]\centering\vspace{-1em}
    \epsfig{file=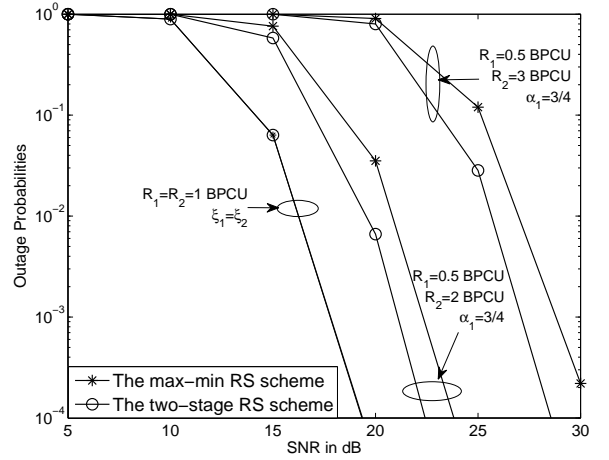, width=0.45\textwidth, clip=}\vspace{-0.5em}
\caption{ The outage probabilities achieved by the max-min relay selection scheme and the two-stage one. $N=10$. }\label{fig2}\vspace{-1em}
\end{figure}

The performance difference between the max-min relay selection scheme and the two-stage one is also illustrated in Figs. \ref{fig1} and \ref{fig2}.  When $\xi_1\neq \xi_2$,  the two-stage relay selection scheme outperforms the max-min scheme, and this observation is consistent with Lemma \ref{lemma2} which shows that  the two-stage relay selection scheme achieves the minimal outage probability. When the number of the relays is small, the performance gap between the two relay selection schemes is small, and the use of more relays can increase this gap. One can also observe that the simulation results perfectly match the analytical results developed in Lemma \ref{lemma1}, which demonstrates the accuracy of the developed analytical results. Furthermore, when $\xi_1=\xi_2$, the two relay selection schemes achieve the same performance, as discussed in Remark 3 in the previous section.\vspace{-0.5em}
\section{Conclusions}
In this paper, we have studied the impact of relay selection on cooperative NOMA. Particularly two types of relay selection have been proposed and studied, where a closed form expression for the outage probability achieved by the two-stage scheme has been obtained. The developed analytical results have demonstrated that the two-stage scheme can achieve not only the optimal diversity gain, but also the minimal outage probability. Compared to the two-stage scheme, the max-min relay selection criterion results in a loss of the outage probability, except in a special case with symmetrical setups.

\vspace{-0.5em}
 \bibliographystyle{IEEEtran}
\bibliography{IEEEfull,trasfer}

  \end{document}